\documentclass[conference]{IEEEtran}
\IEEEoverridecommandlockouts

\usepackage{amsmath,amssymb,amsfonts}
\usepackage{amsthm}
\usepackage{algorithm}
\usepackage{algpseudocode}
\usepackage{graphicx}
\usepackage{tikz}
\usepackage{textcomp}
\usepackage{xcolor}
\usepackage{bm}
\usepackage{cite}

\newcommand{\msgset}{\mathcal{M}}                 
\newcommand{\nmsg}{M}                             
\newcommand{\rxseq}{\mathbf{y}}                   
\newcommand{\rxupto}{\mathbf{y}^{t}}              
\newcommand{\predset}[1]{\Gamma(#1)}              
\newcommand{\cps}{\mathcal{T}}                    
\newcommand{\dline}{D}                            
\newcommand{\sstop}{T_{\mathrm{s}}}               
\newcommand{\epstarg}{\varepsilon_{\mathrm{tar}}} 
\newcommand{\dcal}{\mathcal{D}_{\mathrm{cal}}}    
\newcommand{\thr}[1]{s_{\mathrm{th}}^{#1}}        
\newcommand{\correct}{m}                          
\newcommand{\decided}{\hat{m}}                    

\newtheorem{proposition}{Proposition}

\begin{document}

\title{Conformal Decode-or-Erase: Certified Spiking Decoding for Short-Packet URLLC}

\author{Zihang~Song, Kai Yu, Anders E. Kalør and Petar Popovski
\thanks{The work was supported by Horizon Europe Marie Skłodowska-Curie Action (MSCA) Postdoc Fellowships project NEUER with grant No. 101206861, and the Velux Foundation, Denmark, through the Villum Investigator Grant WATER, nr. 37793.}
\thanks{Z. Song (zsong@es.aau.dk) and P. Popovski (petarp@es.aau.dk) are with the Connectivity Section, Department of Electronic Systems, Aalborg University, 9220 Aalborg, Denmark. K. Yu (kayu@kth.se) is with the School of Electrical Engineering and Computer Science, KTH Royal Institute of Technology, SE-100 44 Stockholm, Sweden. A. E. Kalør (aek@keio.jp) is with the Department of Information and Computer Science, Keio University, Yokohama 223-8522, Japan.}
}

\maketitle

\begin{abstract}
Ultra-reliable low-latency communication (URLLC) must deliver short packets within a hard deadline at low error probability. A conventional receiver waits for the full packet before deciding, spending the full latency and energy even though many packets are resolvable well before the deadline. Committing early without a reliability guarantee, however, risks a silent wrong delivery, so the receiver is left choosing between wasted resources and uncontrolled errors. We propose Conformal Decode-or-Erase (CoDE), a spiking neural network (SNN) receiver that resolves this tension. The SNN reads one symbol per channel use and forms, at predetermined checkpoints, a set of candidate messages that provably contains the true one with a prescribed probability. CoDE commits once the set narrows to a singleton and otherwise declares an erasure that triggers hybrid automatic repeat request (HARQ) retransmission. A wrong commit means the true message fell outside that singleton. Hence, the prediction set provides an upper bound on the undetected error rate in a distribution-free manner and for any pretrained SNN and any calibration size. Simulations confirm reliability at roughly half a fixed-length decoder's latency and compute.
\end{abstract}

\begin{IEEEkeywords}
Ultra-reliable low-latency communications, conformal prediction, spiking neural networks, decode-or-erase, reliability, neuromorphic computing.
\end{IEEEkeywords}

\section{Introduction}
\label{sec:intro}

Ultra-reliable low-latency communication (URLLC) underpins factory automation, vehicular control, and other mission-critical services, where a packet must be delivered within a hard latency budget and at a very low error probability~\cite{durisi2016}. These services rely on short packets, for which the finite-blocklength penalty is severe~\cite{polyanskiy2010}. How many channel uses a packet needs before it can be decoded varies with its difficulty and the signal-to-noise ratio (SNR), and many packets are resolvable well before the deadline. Conventional receivers do not exploit this: a classical decoder validates the full packet with a cyclic redundancy check (CRC), and a dense neural receiver such as the fully convolutional DeepRx~\cite{deeprx} runs the same full-packet forward pass on every packet. Both spend the whole latency budget and its energy regardless of packet difficulty, wasting both on the easy ones, a fixed cost the energy-constrained machine-type and Internet-of-Things endpoints of URLLC can least afford.

Spiking neural networks (SNNs), the basis of neuromorphic computing, remove this waste by processing the packet as a stream of events: they update one symbol at a time, so a decision can be reached before the full packet arrives, and their energy scales with the spikes they emit rather than with the packet length~\cite{neftci2019,snntorch}. Multiplication-free spikes replace the multiply-accumulate operations of a dense neural receiver, and SNN receivers have shown substantial energy reductions~\cite{neurocomm,song2024,song2024xpikeformer,spikingrx,spikacom,song2025leo}.

%
%

\usetikzlibrary{arrows.meta,positioning,calc,backgrounds,shapes.symbols}

\definecolor{ink}{HTML}{1A1A1A}
\definecolor{commit}{HTML}{1F7A53}
\definecolor{erase}{HTML}{C0392B}

\tikzset{
  block/.style={rounded corners=2.5pt, draw=ink!55, line width=0.7pt, fill=white},
  btitle/.style={font=\footnotesize\bfseries, text=ink},
  sub/.style={font=\footnotesize, text=ink!65},
  msgbox/.style={rounded corners=2pt, draw=ink!70, fill=ink!7, font=\bfseries, text=ink, inner sep=2.4pt},
  sym/.style={draw=ink!60, fill=ink!10, line width=0.5pt},
  neuron/.style={circle, minimum size=4.3mm, inner sep=0pt, draw=ink!65, fill=white, line width=0.6pt},
  fneuron/.style={circle, minimum size=4.3mm, inner sep=0pt, draw=ink!75, fill=ink!58, line width=0.6pt},
  syn/.style={draw=ink!18, line width=0.32pt},
  flow/.style={-{Latex[length=2.2mm,width=1.7mm]}, line width=0.9pt, draw=ink!80},
  spike/.style={ink!72, line width=0.8pt, line cap=round},
  inset/.style={rounded corners=1pt, minimum width=5mm, minimum height=3.6mm, inner sep=0pt,
                draw=commit, fill=commit!16, text=commit!45!black, font=\footnotesize\bfseries},
  exset/.style={rounded corners=1pt, minimum width=5mm, minimum height=3.6mm, inner sep=0pt,
                draw=ink!30, fill=ink!4, text=ink!40, font=\footnotesize},
  erset/.style={rounded corners=1pt, minimum width=5mm, minimum height=3.6mm, inner sep=0pt,
                draw=erase!55, fill=erase!12, text=erase!60!black, font=\footnotesize\bfseries},
  outbox/.style={rounded corners=2.5pt, line width=0.9pt, font=\bfseries, inner sep=4pt},
}
\providecommand{\setstack}[3]{
  \foreach \mb [count=\i from 1] in {#3}{%
    \ifnum\mb=1 \node[inset] at (#1,#2-\i*0.40) {\i};%
    \else \node[exset] at (#1,#2-\i*0.40) {\i};\fi}}
\providecommand{\setstackE}[3]{
  \foreach \mb [count=\i from 1] in {#3}{%
    \ifnum\mb=1 \node[erset] at (#1,#2-\i*0.40) {\i};%
    \else \node[exset] at (#1,#2-\i*0.40) {\i};\fi}}
\providecommand{\antenna}[3]{%
  \draw[#3, line width=0.9pt] (#1,#2) -- (#1,#2+0.42);
  \draw[#3, line width=0.9pt] (#1,#2+0.42) -- ++(-0.13,0.22);
  \draw[#3, line width=0.9pt] (#1,#2+0.42) -- ++(0.13,0.22);
  \draw[#3, line width=0.8pt] (#1-0.12,#2) -- (#1+0.12,#2);}

\begin{figure}[t]
\centering
\resizebox{\columnwidth}{!}{%
\begin{tikzpicture}[font=\sffamily]

\node[btitle] at (1.82,8.30) {Transmitter};
\node[btitle] at (4.50,8.30) {Channel};
\node[btitle] at (7.40,8.30) {Spiking receiver};

\draw[block] (0.55,5.82) rectangle (3.10,8.05);
\node[sub] at (1.05,7.62) {message};
\node[msgbox] (msg) at (1.05,7.05) {$\correct$};
\foreach \j in {0,1,2,3}{ \draw[sym] (0.74+\j*0.45,6.30) rectangle ++(0.38,0.40); }
\draw[flow] (msg.south) -- (1.05,6.72);
\antenna{2.88}{6.32}{ink!75}
\draw[flow] (2.56,6.50) -- (2.74,6.50);
\node[sub] at (1.58,5.58) {codeword};
\draw[flow] (3.14,6.50) -- (3.55,6.50);

\node[cloud, draw=ink!55, fill=ink!5, line width=0.7pt,
      cloud puffs=10, cloud puff arc=120, aspect=2.2,
      minimum width=1.75cm, minimum height=0.95cm,
      font=\scriptsize, text=ink!70] (ch) at (4.50,6.55) {channel};
\draw[flow] (ch.east) -- (5.72,6.55);

\draw[block] (5.70,5.82) rectangle (9.10,8.05);
\antenna{5.98}{6.32}{ink!75}
\coordinate (in) at (6.55,6.55);
\coordinate (h1) at (7.30,7.00); \coordinate (h2) at (7.30,6.55); \coordinate (h3) at (7.30,6.10);
\coordinate (o1) at (8.45,7.04); \coordinate (o2) at (8.45,6.68); \coordinate (o3) at (8.45,6.32); \coordinate (o4) at (8.45,6.02);
\foreach \h in {h1,h2,h3}{ \draw[syn] (in) -- (\h); }
\foreach \h in {h1,h2,h3}{ \foreach \o in {o1,o2,o3,o4}{ \draw[syn] (\h) -- (\o); }}
\node[neuron] at (in){};
\node[neuron] at (h1){}; \node[fneuron] at (h2){}; \node[neuron] at (h3){};
\node[neuron, minimum size=3.7mm] at (o1){}; \node[neuron, minimum size=3.7mm] at (o2){};
\node[fneuron, minimum size=3.7mm] at (o3){}; \node[neuron, minimum size=3.7mm] at (o4){};
\draw[flow] (6.10,6.55) -- (6.38,6.55);
\node[sub] at (7.40,5.58) {LIF SNN};
\draw[flow] (8.40,5.80) -- (8.40,4.62);   

\node[font=\footnotesize\bfseries, text=ink] at (4.55,4.95) {Decode-or-erase over channel uses};
\node[sub] at (4.20,4.54) {candidate set at each checkpoint};

\node[sub, text=commit!55!black] at (1.34,3.55) {decode};
\setstack{2.40}{4.35}{1,1,1}
\setstack{4.20}{4.35}{1,1,0}
\setstack{6.00}{4.35}{0,1,0}

\node[sub, text=erase!80!black] at (1.34,1.90) {erase};
\setstackE{2.40}{2.70}{1,1,1}
\setstackE{4.20}{2.70}{1,1,0}
\setstackE{6.00}{2.70}{1,1,0}
\setstackE{7.05}{2.70}{1,1,0}

\draw[-{Latex[length=2mm]}, line width=0.8pt, ink!80] (1.05,0.85) -- (7.85,0.85);
\node[sub] at (4.55,0.12) {channel uses $t$};
\node[sub] at (1.30,1.06) {checkpoints:};
\foreach \x/\lab in {2.40/{$t_1$}, 4.20/{$t_2$}, 6.00/{$t_3$}}{
  \draw[ink!70, line width=0.8pt] (\x,0.75) -- (\x,0.95);
  \fill[ink!75] (\x,0.85) circle (1.3pt);
  \node[sub] at (\x,0.48) {\lab};}
\draw[erase, line width=0.9pt] (7.05,0.75) -- (7.05,0.95);
\fill[erase] (7.05,0.85) circle (1.3pt);
\node[font=\footnotesize\bfseries, text=erase] at (7.05,0.46) {deadline};

\node[outbox, draw=commit, fill=commit!10, text=commit!45!black] (dec) at (9.85,3.55) {Decode $\decided$};
\draw[flow, draw=commit] (6.35,3.55) to[out=0,in=180] (dec.west);
\draw[commit, line width=1.3pt, line cap=round] (8.50,4.08)--(8.61,3.97)--(8.81,4.22);
\node[font=\footnotesize\bfseries, text=commit!45!black] at (9.16,4.12) {ACK};

\node[outbox, draw=erase, fill=erase!9, text=erase!75!black] (era) at (9.85,1.90) {Erase};
\draw[flow, draw=erase!80] (7.50,1.90) to[out=0,in=180] (era.west);
\draw[erase, line width=1.3pt, line cap=round] (8.52,2.37)--(8.70,2.55) (8.52,2.55)--(8.70,2.37);
\node[font=\footnotesize\bfseries, text=erase!75!black] at (9.18,2.47) {NACK};

\draw[erase, line width=0.9pt, -{Latex[length=2.2mm]}, rounded corners=6pt]
  (era.south) -- (9.85,0.02) .. controls (6.5,-0.48) and (2.5,-0.48) .. (0.25,0.02)
  -- (0.25,6.45) -- (0.50,6.45);
\node[font=\footnotesize\bfseries, text=erase!85!black, fill=white, inner sep=1.5pt]
  at (5.00,-0.32) {HARQ retransmission};

\end{tikzpicture}}
\caption{CoDE overview. The spiking receiver reads one symbol per channel use and narrows a conformal set of candidate messages at each checkpoint. It commits when the set reduces to one message and returns an ACK. When no checkpoint up to the deadline yields a singleton, the receiver cannot commit, so it erases and returns a NACK, which triggers a HARQ retransmission.}
\label{fig:system}
\end{figure}

However, deciding early is unsafe without a guarantee on the undetected error rate, which neither approach gives: even the optimal maximum-likelihood (ML) decoder returns only a hard decision, and a CRC only fixes a residual undetected-error floor set by the code rather than a target the receiver sets. Conformal prediction (CP) supplies the missing certificate, distribution-free and finite-sample~\cite{vovk}, and has reached wireless calibration, scheduling, and demodulation~\cite{cohen2023tmlcn,cohen2023spl,cohen2023demod}. Closest to our setting, SpikeCP wraps a pretrained SNN in CP and stops once the prediction set is small enough to meet a target reliability~\cite{spikecp}, but it bounds set coverage, not the undetected error rate URLLC is defined over, and it targets vision-domain classification, not a physical-layer decoder with an erasure-to-HARQ mapping, a hard latency budget, and airtime energy.
We close this gap with Conformal Decode-or-Erase (CoDE), a neuromorphic receiver that decides when to decode (Fig.~\ref{fig:system}). It supplies a decoder-agnostic, distribution-free, finite-sample certificate on the undetected error rate and an adaptive per-packet stop; the certificate follows from the coverage guarantee alone, so it applies to any pretrained decoder, and the SNN is the low-energy example we evaluate. The checkpoints are placed within the latency budget so that the deadline is itself a checkpoint and any unresolved packet is erased in time for an HARQ round. At each checkpoint, CoDE forms a set of candidate messages that provably contains the true one with prescribed probability, commits when that set is a singleton, and otherwise declares an erasure that triggers an HARQ retransmission. A commit to a wrong message requires the true message to fall outside the set, so the conformal coverage level upper-bounds the undetected error rate, independent of the SNN quality and of the calibration size. The channel uses consumed before a commit set both the latency and the airtime energy, while the internal spike count sets the compute energy; one rule therefore governs reliability, latency, and energy together. A deadline-aware budget allocation relaxes the per-checkpoint requirement toward the deadline to lower the erasure rate at the same guarantee.

To our knowledge, CoDE is the first physical-layer receiver to certify this bound while adapting latency and energy to each packet. We make the following contributions.
\begin{itemize}
\item We formulate CoDE, a conformal decode-or-erase rule for a spiking receiver, and show that the conformal coverage level upper-bounds the undetected error rate.
\item We map the SNN time axis onto channel uses, so a single stopping rule sets reliability, latency, and energy, where energy includes both spikes and airtime.
\item We introduce a deadline-aware allocation of the per-checkpoint error budget that lowers the erasure rate relative to a uniform split and preserves the guarantee.
\item We implement and evaluate CoDE in Sionna on the additive white Gaussian noise (AWGN) channel against fixed-length, coverage-only, and non-spiking NN baselines and a standard ML+CRC+HARQ receiver.
\end{itemize}

Section~\ref{sec:model} states the system model and metrics; Section~\ref{sec:method} presents CoDE and its guarantee; Section~\ref{sec:exp} reports the experiments; Section~\ref{sec:concl} concludes.
\section{System Model}
\label{sec:model}

\subsection{Short-Packet Transmission and Channel}
\label{sec:model-tx}
A transmitter sends one of $\nmsg$ equiprobable messages $\correct \in \msgset = \{1,\dots,\nmsg\}$ over a short packet. Message $\correct$ is mapped to a length-$\dline$ symbol sequence $\mathbf{x}(\correct) = (x_1(\correct),\dots,x_{\dline}(\correct))$, with $x_t(\correct) \in \mathbb{C}$ drawn from a short codebook, where $\dline$ is the latency budget in channel uses, not the uncoded length $\lceil\log_2\nmsg/Q\rceil$ for a modulation order $Q$; the codebook itself is the code.

The receiver observes $y_t$ at channel use $t$ over the AWGN channel,
\begin{equation}
y_t = x_t(\correct) + n_t, \qquad n_t \sim \mathcal{CN}(0, N_0).
\label{eq:awgn}
\end{equation}
We write $\rxupto = (y_1,\dots,y_t)$ for the received samples up to channel use $t$, and $\rxseq$ for the full packet.

\subsection{Spiking Receiver}
\label{sec:model-rx}
The receiver uses an SNN with leaky integrate-and-fire (LIF) neurons~\cite{neftci2019,snntorch}. At each channel use $t$ it takes the real and imaginary parts of $y_t$ as input, updates its internal membrane states, and drives $\nmsg$ readout neurons, one per candidate message. Under rate decoding, readout neuron $m$ accumulates the spike count $r_m(\rxupto) = \sum_{t'=1}^{t} b_{m,t'}$, where $b_{m,t'} \in \{0,1\}$ is its spike at channel use $t'$, a causal function of the inputs $\mathbf{y}^{t'}$ received up to $t'$. A softmax over the count vector yields the predictive probabilities
\begin{equation}
p_m(\rxupto) = \frac{e^{r_m(\rxupto)}}{\sum_{m'=1}^{\nmsg} e^{r_{m'}(\rxupto)}}.
\label{eq:prob}
\end{equation}

We assume a calibration set $\dcal = \{(\rxseq[i], \correct[i])\}_{i=1}^{|\dcal|}$ of labeled packets exchangeable with the test packet. In practice these are pilot packets or CRC-verified decodes. The calibration size sets a floor on the achievable target: the guarantee requires $\epstarg \ge 1/(|\dcal|+1)$, so reaching operational URLLC levels ($\epstarg \lesssim 10^{-5}$) demands either a large calibration set or extreme-quantile methods.

\subsection{Decision Timing and HARQ}
\label{sec:model-timing}
At each checkpoint $t \in \cps$, where $\cps \subseteq \{1,\dots,\dline\}$ with $\dline \in \cps$, the receiver forms a prediction set $\predset{\rxupto} \subseteq \msgset$, constructed in Section~\ref{sec:method}. If any checkpoint yields a singleton, the receiver commits its element $\decided$ at the earliest such checkpoint,
\begin{equation}
\sstop(\rxseq) = \min\{t \in \cps : |\predset{\rxupto}| = 1\},
\label{eq:stop}
\end{equation}
and sends an acknowledgment (ACK). Otherwise it declares an erasure at $\dline$, and the negative acknowledgment (NACK) triggers an HARQ retransmission.

\subsection{Performance Metrics}
\label{sec:model-metrics}
We evaluate the following quantities, with all probabilities taken over the test packet, the channel, and the calibration set.

\emph{Reliability.} The undetected error rate is the probability of committing to a wrong message,
\begin{equation}
P_{\mathrm{ue}} = \Pr\!\big(\decided \neq \correct,\ \text{commit}\big),
\label{eq:ue}
\end{equation}
and the receiver is reliable at target $\epstarg$ if $P_{\mathrm{ue}} \le \epstarg$. An erasure is not an error; it costs a retransmission rather than a silent wrong delivery.

\emph{Latency.} The decision latency is the average stopping time $\mathbb{E}[\sstop(\rxseq)]$ in channel uses. The erasure rate is the probability that no checkpoint yields a singleton,
\begin{equation}
P_{\mathrm{er}} = \Pr\!\big(|\predset{\rxupto}| \neq 1 \ \text{for all}\ t \in \cps\big).
\label{eq:per}
\end{equation}
Under HARQ, the delivered latency compounds the stopping time across retransmission rounds.

\emph{Energy.} The compute energy is proportional to the expected internal spike count $\mathbb{E}[S(\rxseq)]$, and the airtime energy is proportional to the expected channel uses $\mathbb{E}[\sstop(\rxseq)]$.

\section{Conformal Decode-or-Erase}
\label{sec:method}

\subsection{Conformal Set Construction}
\label{sec:method-set}
At a checkpoint $t \in \cps$, a nonconformity (NC) score $s_m(\rxupto)$ measures the lack of confidence of the SNN in message $m$. We use the global log-loss score
\begin{equation}
s_m(\rxupto) = -\log p_m(\rxupto),
\label{eq:nc-global}
\end{equation}
which couples all readout neurons. The prediction set collects the messages whose score is below a checkpoint threshold $\thr{t}$,
\begin{equation}
\predset{\rxupto} = \{\, m \in \msgset : s_m(\rxupto) \le \thr{t} \,\}.
\label{eq:set}
\end{equation}

The threshold follows split conformal prediction~\cite{vovk,tibshirani2019}. For each checkpoint $t$ and each calibration packet $i$, let $s^t[i] = s_{\correct[i]}(\mathbf{y}^{t}[i])$ be the NC score of its true message. Given a per-checkpoint budget $\alpha_t \in (0,1)$, set
\begin{equation}
\thr{t} = \big\lceil (1-\alpha_t)(|\dcal|+1) \big\rceil\text{-th smallest of } \{s^t[i]\}_{i=1}^{|\dcal|},
\label{eq:thr}
\end{equation}
when $\alpha_t \ge 1/(|\dcal|+1)$, and $\thr{t} = \infty$ otherwise. By the exchangeability of the calibration and test packets, this threshold yields the per-checkpoint coverage
\begin{equation}
\Pr\!\big(\correct \in \predset{\rxupto}\big) \ge 1 - \alpha_t.
\label{eq:cov}
\end{equation}

\subsection{Decode-or-Erase Rule and Reliability}
\label{sec:method-guarantee}
Recall the rule of Section~\ref{sec:model-timing}: CoDE commits the single element of the first singleton set at the stopping time~\eqref{eq:stop}, and erases at $\dline$ otherwise. The next result bounds its undetected error rate.

\begin{proposition}[Undetected-error certificate of CoDE]
\label{prop:ue}
Let the thresholds be set by~\eqref{eq:thr} with budgets $\{\alpha_t\}_{t\in\cps}$ that satisfy $\sum_{t\in\cps}\alpha_t = \epstarg$. Then the decode-or-erase rule has undetected error rate $P_{\mathrm{ue}} \le \epstarg$ for any pretrained SNN and any calibration size $|\dcal|$. We call this distribution-free, finite-sample bound the \emph{certificate} that CoDE attaches to each commit.
\end{proposition}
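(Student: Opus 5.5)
The plan is an event-inclusion argument followed by a union bound over the checkpoints, using only the per-checkpoint coverage~\eqref{eq:cov}.

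\textbf{Step 1: reduce a wrong commit to a miscoverage event.} Suppose the rule commits with $\decided \neq \correct$. By~\eqref{eq:stop}, it commits at $\sstop(\rxseq)=t^\star\in\cps$ with $\predset{\mathbf{y}^{t^\star}}=\{\decided\}$. Since $\decided\neq\correct$, the true message is not in that set. This gives
\begin{equation}
\{\decided\neq\correct,\ \text{commit}\} \subseteq \bigcup_{t\in\cps}\big\{\correct\notin\predset{\rxupto}\big\}.
\end{equation}
The inclusion is deterministic and does not depend on how the SNN or its scores behave.

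\textbf{Step 2: union bound.} From Step 1,
\begin{equation}
P_{\mathrm{ue}}\le\sum_{t\in\cps}\Pr\big(\correct\notin\predset{\rxupto}\big).
\end{equation}
Applying~\eqref{eq:cov} at each fixed checkpoint $t$ bounds each term by $\alpha_t$. The sum is therefore at most $\sum_{t}\alpha_t=\epstarg$.

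\textbf{Step 3: justify~\eqref{eq:cov} for every $t$.} The only real care is needed here. The stopping time is random, so conditioning on stopping at $t$ would break exchangeability. The union bound avoids this, because it only uses the \emph{marginal} coverage at each \emph{fixed} $t$.

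For a fixed $t$, the scores $s^t[1],\dots,s^t[|\dcal|]$ and the test score $s_{\correct}(\rxupto)$ are obtained by applying one fixed function to exchangeable pairs. The SNN is pretrained independently of $\dcal$, and the map $\rxseq\mapsto\mathbf{y}^t$ is deterministic. Hence these $|\dcal|+1$ scores are exchangeable.

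The standard split-conformal quantile lemma~\cite{vovk,tibshirani2019} then states that the test score is at most the $\lceil(1-\alpha_t)(|\dcal|+1)\rceil$-th smallest calibration score with probability at least $1-\alpha_t$. Ties only help, because the set in~\eqref{eq:set} uses $\le$.

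When $\alpha_t<1/(|\dcal|+1)$, the threshold is $\thr{t}=\infty$. The set is then all of $\msgset$, and coverage is trivially $1$.

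Neither case uses anything about the SNN or the size of $|\dcal|$, so the bound holds for any pretrained SNN and any calibration size. The probability is taken jointly over the calibration set and the test packet, matching Section~\ref{sec:model-metrics}.

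I expect Step 3 to be the main point to state carefully: one must check that every per-checkpoint guarantee is marginal and that the union bound never conditions on the data-dependent stop. Everything else is direct.
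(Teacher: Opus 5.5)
Your proposal is correct and follows the paper's proof. Both argue that a wrong commit is deterministically contained in the union of per-checkpoint miscoverage events, then apply the union bound with the marginal coverage~\eqref{eq:cov} at each fixed checkpoint. Your Step 3 only re-derives~\eqref{eq:cov}, including the $\thr{t}=\infty$ case, which the paper takes as already established from exchangeability before stating the proposition.
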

\begin{proof}
Let $\tau = \sstop(\rxseq)$ be the commit time. A commit gives $\predset{\mathbf{y}^{\tau}} = \{\decided\}$. If $\decided \neq \correct$, then $\correct \notin \predset{\mathbf{y}^{\tau}}$, so the true message is excluded from the prediction set at checkpoint $\tau \in \cps$. Therefore
\begin{equation}
\{\decided \neq \correct,\ \text{commit}\} \subseteq \bigcup_{t\in\cps}\{\,\correct \notin \predset{\rxupto}\,\}.
\label{eq:inclusion}
\end{equation}
The union bound and the per-checkpoint coverage~\eqref{eq:cov} give~\cite{spikecp}
\begin{equation}
P_{\mathrm{ue}} \le \sum_{t\in\cps}\Pr\!\big(\correct \notin \predset{\rxupto}\big) \le \sum_{t\in\cps}\alpha_t = \epstarg.
\end{equation}
\end{proof}
Unlike a conventional hard-decision receiver, which returns a decoded message with no such guarantee, CoDE certifies this bound for any decoder. Because CP acts only at inference, the guarantee is independent of how the SNN is trained, and the SNN quality sets only the achievable latency, energy, and erasure operating point. A CP-aware objective that targets the erasure rate at the deadline is a natural refinement.

\subsection{Error-Budget Allocation}
\label{sec:method-budget}
Proposition~\ref{prop:ue} holds for any budgets that sum to $\epstarg$, which leaves room for a design choice.

\emph{Uniform (Bonferroni).} The baseline splits the budget evenly, $\alpha_t = \epstarg/|\cps|$, as in SpikeCP. As $|\cps|$ grows, the per-checkpoint budget shrinks, the thresholds~\eqref{eq:thr} tighten, the sets~\eqref{eq:set} grow, and the erasure rate rises.

\emph{Deadline-aware.} We instead use weights $w_t \ge 0$ with $\sum_{t\in\cps} w_t = 1$, nondecreasing in the checkpoint rank $i_t = |\{t' \in \cps : t' \le t\}|$, and set $\alpha_t = w_t\,\epstarg$. A larger budget at the late checkpoints gives smaller sets there, so more packets commit by the deadline and fewer are erased. A representative instance is the linear allocation $w_t \propto i_t$. Since $\sum_t \alpha_t = \epstarg$ is the only condition Proposition~\ref{prop:ue} requires, any nondecreasing weight vector summing to one yields a valid allocation without additional assumptions.

\emph{Latency-erasure tradeoff.} The deadline-aware allocation trades a slightly stricter early checkpoint against a lower erasure rate on hard packets; the early checkpoint may delay a commit on easy packets, but for URLLC, where an erasure costs a full retransmission round, this exchange typically reduces the delivered latency.

\emph{Calibration floor.} The smallest achievable $\epstarg$ is limited by the calibration size: a per-checkpoint budget below $1/(|\dcal|+1)$ forces $\predset{\rxupto} = \msgset$, which never commits. Section~\ref{sec:exp} quantifies this floor.


\begin{algorithm}[t]
\caption{CoDE: conformal decode-or-erase}
\label{alg:code}
\begin{algorithmic}[1]
\Require trained SNN; calibration set $\dcal$; checkpoints $\cps$; target $\epstarg$; budget weights $\{w_t\}$
\Statex \textbf{Offline calibration}
\For{each checkpoint $t \in \cps$}
  \State $\alpha_t \gets w_t\,\epstarg$
  \State compute calibration scores $\{s^t[i]\}_{i=1}^{|\dcal|}$
  \State set $\thr{t}$ as the $(1-\alpha_t)$-quantile via~\eqref{eq:thr}
\EndFor
\Statex \textbf{Online decoding}
\For{$t = 1, 2, \dots, \dline$}
  \State update the SNN with received symbol $y_t$
  \If{$t \in \cps$}
    \State $\predset{\rxupto} \gets \{ m \in \msgset : s_m(\rxupto) \le \thr{t} \}$
    \If{$|\predset{\rxupto}| = 1$}
      \State commit $\decided \gets$ element of $\predset{\rxupto}$; send ACK
      \State \Return $\decided$ \Comment{stop, $\sstop = t$}
    \EndIf
  \EndIf
\EndFor
\State declare erasure; send NACK \Comment{HARQ retransmission}
\end{algorithmic}
\end{algorithm}

\providecommand{\todo}[1]{\textcolor{red}{\textbf{[TODO: #1]}}}
\providecommand{\incfig}[2]{\IfFileExists{#2}{\includegraphics[width=#1]{#2}}{\fbox{\parbox[c][2.6cm][c]{\dimexpr#1-2\fboxsep-2\fboxrule\relax}{\centering\ttfamily\scriptsize\color{black!55}[missing figure: \detokenize{#2}]}}}}

\section{Experiments}
\label{sec:exp}

\subsection{Setup}
\label{sec:exp-setup}
We implement CoDE in Sionna~\cite{sionna} for the transmit chain and the channel, and a from-scratch LIF SNN in snnTorch~\cite{snntorch} for the receiver, on a single GPU. We evaluate on the AWGN channel~\eqref{eq:awgn}, where the ML decoder is optimal and serves as our accuracy reference.
We set $\nmsg = 16$ messages on a length-$\dline = 32$ QPSK codebook and sweep $E_b/N_0$ from $-2$ to $10$\,dB with targets $\epstarg \in \{0.05, 0.10, 0.15, 0.20\}$. We adopt a fixed random codebook: each message maps to an i.i.d. QPSK sequence, the canonical finite-blocklength achievability construction~\cite{polyanskiy2010}, with no separate channel code, at rate $\log_2\nmsg/\dline = 0.125$ bit per channel use. We use a small $\nmsg$ and a simple fully-connected SNN architecture as a controlled proof of concept for the decode-or-erase mechanism. Since the guarantee of Proposition~\ref{prop:ue} is code-agnostic, larger codebooks can be handled by advanced streaming architectures~\cite{schoene2024,song2024xpikeformer}.

\begin{figure}[t]
\centering
\incfig{\columnwidth}{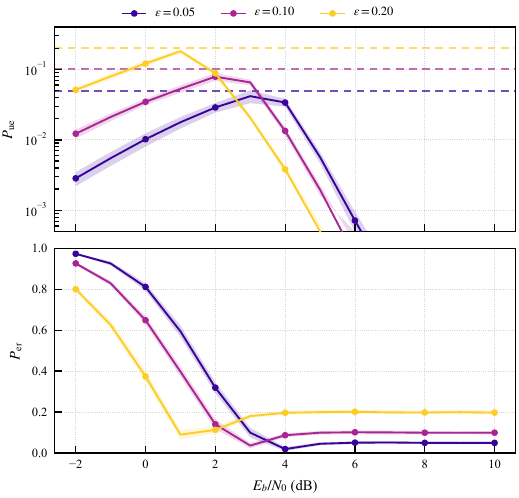}
\caption{Undetected error rate (top) and erasure rate (bottom) versus $E_b/N_0$ on AWGN, for $\epstarg \in \{0.05, 0.10, 0.15, 0.20\}$, $\nmsg=16$, $\dline=32$. Horizontal lines mark each target $\epstarg$. Bands span the calibration draws.}
\label{fig:reliability}
\end{figure}

The SNN is a two-layer LIF network with $256$ hidden neurons per layer, trained with surrogate gradients on the checkpoint cross-entropy with no conformal step at training time. At inference we apply CoDE with evenly spaced checkpoints $\cps$, the global log-loss score~\eqref{eq:nc-global}, and either the uniform Bonferroni or the deadline-aware budget allocation of Section~\ref{sec:method-budget}. The calibration set holds $|\dcal| = 2000$ packets; every operating point averages $25$ independent calibration draws against $2\times10^{4}$ test packets each. We fix $|\cps| = 8$ except in the allocation study, which sweeps $|\cps| \in \{1,2,4,8\}$, and the reliability study, which uses a single checkpoint to expose the per-target binding.

\begin{figure}[t]
\centering
\incfig{\columnwidth}{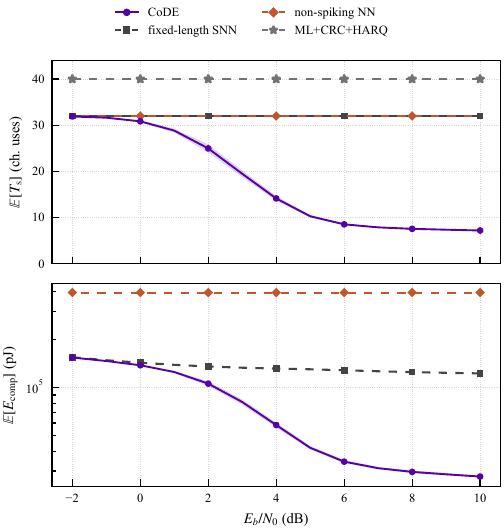}
\caption{Decision latency (top) and compute energy (bottom, log scale) versus $E_b/N_0$ on AWGN at $\epstarg=0.10$. On latency, CoDE commits before the deadline $\dline$; the fixed-length SNN and non-spiking NN sit at $\dline$, and ML+CRC+HARQ above it by its CRC overhead. Energy compares the neural backbones. Bands span the calibration draws.}
\label{fig:tradeoff}
\end{figure}

We compare CoDE against four baselines. The \emph{fixed-length SNN} runs our SNN to all $\dline$ symbols, with no early stop or certificate; the \emph{coverage-only} scheme stops on set size in the SpikeCP style and controls set coverage rather than undetected error, without the erasure or HARQ semantics; and the \emph{non-spiking NN} is a stateless dense feedforward neural network of comparable size that reads the whole codeword in one pass. The \emph{ML+CRC+HARQ} baseline is the standard error-control stack: full-length maximum-likelihood decoding with a $16$-bit CRC and HARQ retransmission. On this codebook ML decodes without error, so its CRC never fails and it operates at the deadline plus a CRC overhead of eight channel uses on QPSK. To compare across backbones we convert operations to energy with a stated $45$\,nm model~\cite{song2024xpikeformer}: the non-spiking NN's dense multiply-accumulate operations cost $4.6$\,pJ each, whereas a spiking receiver is multiplication-free and costs $0.9$\,pJ per accumulate. The resulting per-packet $\mathbb{E}[E_{\mathrm{comp}}]$ is reported in Fig.~\ref{fig:tradeoff} as a model proxy, not a silicon measurement.

\subsection{Reliability and Erasure}
\label{sec:exp-reliability}
Fig.~\ref{fig:reliability} shows $P_{\mathrm{ue}}$ and $P_{\mathrm{er}}$ against $E_b/N_0$, with a horizontal line at each $\epstarg$. Every $P_{\mathrm{ue}}$ curve stays at or below its target line, as guaranteed by Proposition~\ref{prop:ue}. $P_{\mathrm{ue}}$ peaks at intermediate SNR: the receiver mostly erases at low SNR and mostly decodes correctly at high SNR, so the bound binds in the middle regime. The measured rate sits below the target by a margin, as a marginal, distribution-free guarantee is conservative for a finite calibration set; $\epstarg$ is the operating knob a designer turns, and a CP-aware objective is the route to a tighter binding.

The erasure rate is the price of the certificate. It is high at low SNR and falls toward a floor near $\epstarg$ as SNR rises, and a looser target erases less.

\subsection{Latency, Energy, and the Spiking Backbone}
\label{sec:exp-tradeoff}
Fig.~\ref{fig:tradeoff} showcases the two resource savings of CoDE on AWGN at $\epstarg = 0.10$. On latency, the fixed-length SNN and the non-spiking NN sit at the deadline $\dline$, ML+CRC+HARQ sits above it by its CRC overhead, and CoDE commits earlier than all of them as $E_b/N_0$ rises. We compare energy only across the neural backbones, as the ML+CRC+HARQ decoder computes by correlation rather than neural operations and does not fit the same energy model. The non-spiking NN costs about $5\times$ the fixed-length SNN at every SNR, a fixed spiking-compute saving; CoDE then falls a further factor below the fixed-length SNN that is negligible at low SNR, where it still reads most of the packet, and grows to about $4\times$ by $10$\,dB, the adaptive-stopping saving. The two compound to roughly $5\times$ lower energy than the non-spiking NN at low SNR and nearly $30\times$ at $10$\,dB.

\begin{figure}[t]
\centering
\incfig{\columnwidth}{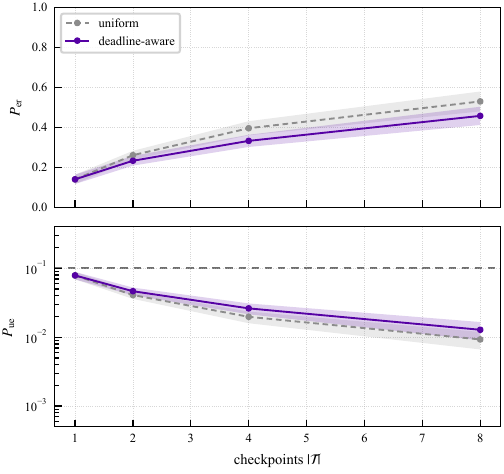}
\caption{Uniform Bonferroni versus deadline-aware budget split, as a function of the number of checkpoints $|\cps|$, at $\epstarg=0.10$ and $\dline=32$. Top: erasure rate. Bottom: undetected error rate.}
\label{fig:alloc}
\end{figure}

\subsection{Deadline-Aware Allocation}
\label{sec:exp-alloc}
Fig.~\ref{fig:alloc} isolates the budget allocation. At a fixed target and deadline we sweep $|\cps|$ and compare the uniform Bonferroni split with the deadline-aware split. As $|\cps|$ grows, the uniform split tightens every checkpoint and erases more, whereas the deadline-aware split concentrates the budget at the late checkpoints and erases less. At $|\cps| = 8$ and $\epstarg = 0.10$ the deadline-aware split cuts the erasure rate from $0.53$ to $0.46$, a gap that widens with $|\cps|$, while both splits hold $P_{\mathrm{ue}}$ near $0.01$, an order of magnitude under the target.

\subsection{Comparison with Baselines}
\label{sec:exp-baselines}
Fig.~\ref{fig:baselines} places CoDE against the baselines on the reliability-resource plane at a fixed SNR on AWGN. CoDE traces a certified frontier with $P_{\mathrm{ue}} \le \epstarg$ across the targets, a certificate none of the other schemes carries.
At $E_b/N_0 = 4$\,dB and $\epstarg = 0.05$, CoDE commits at a certified $P_{\mathrm{ue}} = 0.012$, a $48\%$ latency and $41\%$ compute reduction over the fixed-length SNN, whose uncertified $P_{\mathrm{ue}} \approx 0.04$ is higher despite reading the full packet. The coverage-only scheme stops earliest but its $P_{\mathrm{ue}} = 0.079$ exceeds the target, while the non-spiking NN and ML+CRC+HARQ reach $P_{\mathrm{ue}} \approx 0$ only at the full deadline; CoDE does not beat their accuracy but commits far earlier. At full length this codebook is comfortably decodable, so CoDE's undetected errors come from committing early on partial evidence, not a channel floor; the certificate is what keeps that commit safe.

\begin{figure}[t]
\centering
\incfig{\columnwidth}{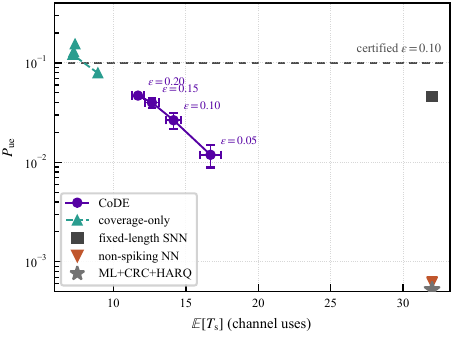}
\caption{CoDE versus baselines on the reliability-resource plane at $E_b/N_0=4$\,dB on AWGN, swept over $\epstarg \in \{0.05, 0.10, 0.15, 0.20\}$. Baselines: fixed-length SNN, coverage-only, non-spiking NN, and ML+CRC+HARQ; the error-free non-spiking NN and ML+CRC+HARQ points sit at the axis floor.}
\label{fig:baselines}
\end{figure}

\section{Conclusion}
\label{sec:concl}

We presented CoDE, a spiking receiver that decodes or erases short packets under a distribution-free reliability certificate: the conformal coverage level bounds the undetected error rate for any pretrained SNN and any calibration size. A single stopping rule sets reliability, latency, and both energy terms, and a deadline-aware error-budget allocation lowers the erasure rate at equal reliability. On AWGN at a $0.05$ target, CoDE matched a fixed-length decoder's reliability while roughly halving its latency and compute. 

\bibliographystyle{IEEEtran}
\bibliography{references}

\end{document}